\newcommand{\bZ} { {\mathbb{Z}}}
\def\Dtx{{D_t^x}}
\def\Dtxbar{{D_t^{\bar{x}}}}
\newcommand{\cC} { {\mathcal{C}}}
\newcommand{\cP} { {\mathcal{P}}}
\newcommand{\vx} { {\bf x}}
\newcommand{\frako} { {\mathfrak o}}
\newcommand{\pa}{\partial}
\newcommand{\resid}{\operatorname{residue}}
\newcommand{\resul}{\operatorname{resultant}}
\def\lclm{\operatorname{lclm}}
\newtheorem{theorem}{Theorem}
\newtheorem{prop}[theorem]{Proposition}
\newtheorem{lemma}[theorem]{Lemma}
\newtheorem{problem}[theorem]{Problem}
\newtheorem{example}[theorem]{Example}
\theoremstyle{remark}
\newtheorem*{remark}{Remark}
\begin{document}

\title{Telescopers for Rational and\\ Algebraic Functions via Residues}

\numberofauthors{3}

\author{%
 \alignauthor Shaoshi Chen\titlenote{Supported by the National Science Foundation (NFS) grant CCF-1017217.}\\[\smallskipamount]
      \affaddr{\strut Department of Mathematics}\\
      \affaddr{\strut NCSU}\\
      \affaddr{\strut Raleigh, NC 27695, USA}\\[\smallskipamount]
      \email{\strut schen21@ncsu.edu}
 \alignauthor \strut Manuel Kauers\titlenote{Supported by the Austrian Science Fund (FWF) grant Y464-N18.}\\[\smallskipamount]
      \affaddr{\strut RISC}\\
      \affaddr{\strut Johannes Kepler University}\\
      \affaddr{\strut 4040 Linz, Austria}\\[\smallskipamount]
      \email{\strut mkauers@risc.jku.at}
 \alignauthor \strut Michael F.\ Singer\raisebox{1.4ex}{$\ast$}\\[\smallskipamount]
      \affaddr{\strut Department of Mathematics}\\
      \affaddr{\strut NCSU}\\
      \affaddr{\strut Raleigh, NC 27695, USA}\\[\smallskipamount]
      \email{\strut singer@ncsu.edu}
}

\maketitle
\begin{abstract}
We show that the problem of constructing telescopers for functions
of~$m$ variables is equivalent to the problem of constructing telescopers
for algebraic functions of~$m-1$ variables and present a new algorithm to
construct telescopers for algebraic functions of two variables. These considerations
are based on analyzing the residues of the input. According to experiments, the resulting
algorithm for rational functions of three variables is faster than known algorithms, at least
in some examples of combinatorial interest.  The algorithm for algebraic
functions implies a new bound on the order of the telescopers.
\end{abstract}

\kern-\medskipamount

\category{I.1.2}{Computing Methodologies}{Symbolic and Algebraic Manipulation}[Algorithms]

\kern-\medskipamount

\terms{Algorithms}

\kern-\medskipamount

\keywords{Symbolic Integration, Creative Telescoping}

\section{Introduction}\label{SECT:intro}

The problem of creative telescoping is to find, for a given ``function'' $f$ in several
variables $t_1,\dots,t_n$, $x_1,\dots,x_m$, linear differential operators $L$ involving
only the $t_i$ and derivations with respect to the~$t_i$, and some
other ``functions'' $g_1,\dots,g_m$ such that
\[
  L(f) = D_{x_1}(g_1) + \cdots + D_{x_m}(g_m),
\]
where~$D_{x_j}$ denotes the derivative with respect to~$x_j$. The main motivation for
computing such operators $L$ (called ``telescopers'' for~$f$) is that, under suitable
technical assumptions on $f$ and the domain~$\Omega$, these operators have the
definite integral
\[
  F(t_1,\dots,t_n)=\int_\Omega f(t_1,\dots,t_n,x_1,\dots,x_m)\, dx_1 \cdots dx_m
\]
as a solution. Once differential operators for $F$ have been found, other
algorithms can next be used for determining possible closed forms, or asymptotic
information, or recurrence equations for the series coefficients of~$F$.

There are general algorithms for computing telescopers when the input~$f$ is
holonomic~\cite{Zeilberger1990,lipshitz88,Wilf1992,takayama92,chyzak2000} as
well as special-purpose algorithms designed for restricted input classes
\cite{Zeilberger1990, Zeilberger1991, bostan10}.  The focus in the present paper
is on two such restricted input classes: rational and algebraic functions of
several variables.  Our first result is that an algorithm for computing
telescopers for rational functions of~{$m$} variables directly leads to an
algorithm for computing telescopers for algebraic functions of~{$m-1$}
variables and {\em vice versa} {(Section~\ref{sec:equiv})}.  Our second
result is a new algorithm for creative telescoping of algebraic functions of two
variables (Section~\ref{SEC:alg}), which, by the equivalence, also implies a new
algorithm for creative telescoping of rational functions of three variables.
The algorithm for algebraic functions is mainly interesting because it implies a
new bound on the order of the telescoper in this case
(Theorem~\ref{thm:algbound}), while the implied algorithm for rational functions
is mainly interesting because {at least for some examples it provides an
  efficient alternative to other methods} (Section~\ref{sec:impl}).

For a precise problem description, let~$k$ be a field of characteristic
zero, and~$k(t, \vx)$ be the field of rational functions in~$t$
and~$\vx=(x_1, \ldots, x_m)$ over~$k$. Let~$\hat{\vx}_m$ denote the~$m-1$ variables~$x_1, \ldots, x_{m-1}$.
The algebraic closure of a field~$K$ will be denoted by~$\overline{K}$.
The usual derivations~$\pa/\pa_t$ and~$\pa/\pa_{x_i}$
are denoted by~$D_t$ and~$D_{x_i}$, respectively.
Let $k(t)\langle D_t \rangle$ be the ring of linear differential
operators in~$t$ with coefficients in~$k(t)$.  Then we are interested in the following two problems:

\begin{problem}\label{PB:trirat}
Given~$f\in k(t, \vx)$, find a nonzero operator~$L\in k(t)\langle D_t \rangle$
such that
\[
    L(f) = D_{x_1}(g_1) + \cdots + D_{x_m}(g_m) \quad \text{for some~$g_j\in k(t, \vx)$.}
\]
Such an~$L$ is called a \emph{telescoper} for~$f$, and the rational functions~$g_1, \dots, g_m$ are
called \emph{certificates} of~$L$.
\end{problem}

\begin{problem}\label{PB:bialg}
Given~$\alpha\in \overline{k(t, \hat{\vx}_m)}$, find a nonzero
operator~$L\in k(t)\langle D_t \rangle$ such that
\[
  L(\alpha) {=} D_{x_1}(\beta_1) + \cdots + D_{x_{m-1}}(\beta_{m-1})\text{ for some~$\beta_j\in
  \overline{k(t, \hat{\vx}_m)}$.}
\]
Such an~$L$ is called a \emph{telescoper} for~$\alpha$, and the algebraic functions~$\beta_1, \ldots, \beta_{m-1}$ 
are called \emph{certificates} of~$L$.
\end{problem}

Both the equivalence of these two problems and the new algorithm for
Problem~\ref{PB:bialg} {(when $m=2$)} are based on the general idea of
eliminating residues in the input.  As an introduction to this approach, let us
consider the problem of finding a telescoper and certificate for a rational
function in two variables, that is, given a \emph{rational} function $f \in
k(t,x)$, we want to find a nonzero $L \in k(t)\langle D_t \rangle$ such that
$L(f) = D_x(g)$ for some $g\in k(t, x)$.  We may consider $f$ as an
element of $\overline{K}(x)$, where $K = k(t)$, and as such we may write
\begin{equation} \label{eqn1}
  f = p + \sum_{i=1}^n \sum_{j=1}^{m_i} \frac{\alpha_{i,j}}{(x-\beta_i)^j},
\end{equation}
where $p\in {K}[x]$, the $\beta_i$ are the roots in $\overline{K}$ of
the denominator of $f$ and the $\alpha_{i,j}$ are in~$\overline{K}$.
We refer to the element $\alpha_{i,1}$ as the \emph{residue} of~$f$ at~$\beta_i$.
Using Hermite reduction, one sees that a
rational function $h \in K(x)$ is of the form $h = D_x(g)$ for some $g \in
K(x)$ if and only if all residues of $h$ are zero.  Therefore to find a
telescoper for $f$ it is enough to find a nonzero operator $L \in K\langle
D_t \rangle$ such that $L(f)$ has only zero residues.  For example assume that
$f$ has only simple poles, i.e., $f = \frac{a}{b}, a,b \in K[x]$, $\deg_xa <
\deg_x b$ and $b$ squarefree.  We then know that the Rothstein-Trager resultant
\cite{Trager1976,Rothstein1977}
\[
  R := \resul_x(a-zD_x(b), b)\in K[z]
\]
is a polynomial whose roots are the residues at the poles of~$f$.  Given a
squarefree polynomial in $K[z]=k(t)[z]$, differentiation with respect to $t$ and
elimination allow one to construct a nonzero linear differential operator $L \in
k(t)\langle D_t \rangle$ such that $L$ annihilates the roots of this polynomial.
Applying $L$ to each term of \eqref{eqn1} one sees that $L(f)$ has zero residues
at each of its poles. Applying Hermite reduction to $L(f)$ allows us to find a
$g$ such that $L(f) = D_x(g)$.

The main idea in the method described above is that nonzero residues are the
obstruction to being the derivative of a rational function and one constructs a
linear operator to remove this obstruction.  Understanding how residues form an
obstruction to integrability and constructing linear operators to remove this
obstruction will be the guiding principal that motivates the results which
follow.

\smallskip

The authors would like to thank Barry Trager for useful discussions and
outlining the proof of Proposition~\ref{der:ramified}.

\section{Telescopers for rational\hskip0pt plus1fill\break functions}\label{sec:equiv}

\subsection{Rational and algebraic integrability}

In this section, we give a criterion which decides whether or not~$1$ is a
telescoper for a rational function in~$k(t, \vx)$.  Again, let~$K=k(t)$.
A rational function~$f\in K(\vx)$ is
said to be \emph{rational integrable} with respect to~$\vx$ if~$f = \sum_{j=1}^m D_{x_j}(g_j)$
for some~$g_j\in K(\vx)$.  An algebraic function~$\alpha\in
\overline{K(\hat{\vx}_m)}$ is said to be \emph{algebraic integrable} with respect to~$\hat{\vx}_m$
if~$\alpha = \sum_{j=1}^{m-1}D_{x_j}(\beta_j)$ for some~$\beta_j\in \overline{K({\hat{\vx}}_m)}$.
By taking traces, one can show that if $\alpha$ is algebraic integrable with respect to~$\hat{\vx}_m$,
then an antiderivative of~$\alpha$ already exists in the field~$K(\hat{\vx}_m)(\alpha)$.

For a rational function~$f\in K(\vx)$, Hermite reduction with respect
to~$x_m$ decomposes~$f$ into
\begin{equation}\label{EQ:hermite}
f = D_{x_m}(r) + \frac{a}{b},
\end{equation}
where $r\in K(\vx)$ and $a, b \in K(\hat{\vx}_m)[x_m]$ such that $\deg_{x_m}(a)<\deg_{x_m}(b)$ and $b$ is
squarefree with respect to~$x_m$. It is clear that $f$ is rational integrable with respect to~$\vx$
if and only if $a/b$ in~\eqref{EQ:hermite} is rational integrable with respect to~$\vx$.
Over the field $\overline{K(\hat{\vx}_m)}$, one can write a rational function $f\in K(\vx)$ as
\[
 f = p + \sum_{i=1}^n \sum_{j=1}^{m_i} \frac{\alpha_{i j}}{(x_m-\beta_i)^j},
\]
where $p\in K(\hat{\vx}_m)[x_m]$ and the~$\alpha_{i j}, \beta_i$ are
in $\overline{K(\hat{\vx}_m)}$. We call $\alpha_{i1}$ the \emph{$x_m$-residue} of~$f$
at~$\beta_i$, denoted by $\resid_{x_m}(f, \beta_i)$.
\begin{prop}\label{PROP:residue}
Let~$f\in K(\vx)$ and~$\beta\in \overline{K(\hat{\vx}_m)}$. Then
\begin{itemize}
\item[(i)] $\resid_{x_m}(f, \beta) =0$ if~$f=D_{x_m}(g)$ for some~$g\in K(\vx)$
\item[(ii)] $D_{x_j}(\resid_{x_m}(f, \beta))=\resid_{x_m}(D_{x_j}(f), \beta)$
for all~$j$ with~$1\leq j\leq m-1$.
\end{itemize}
\end{prop}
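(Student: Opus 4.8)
The plan is to work in the field $\overline{K(\hat{\vx}_m)}(x_m)$, in which $f$ has the partial fraction expansion displayed just before the proposition, and to exploit one structural fact about the two kinds of derivation involved. Since the poles $\beta_i$ lie in $\overline{K(\hat{\vx}_m)}$, they do not involve $x_m$, so $D_{x_m}(\beta_i)=0$; on the other hand, for $1\le j\le m-1$ the operator $D_{x_j}$ restricts to a derivation of the coefficient field $K(\hat{\vx}_m)$, which, since we are in characteristic zero and hence in a separable situation, extends uniquely to $\overline{K(\hat{\vx}_m)}$ and fixes $x_m$. Thus $D_{x_j}(x_m)=0$, while $D_{x_j}(\beta_i)$ and $D_{x_j}(\alpha_{i\ell})$ are again elements of $\overline{K(\hat{\vx}_m)}$. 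These facts, together with the uniqueness of the partial fraction decomposition, drive both parts.

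For part (i), I would take a partial fraction expansion $g = q + \sum_i\sum_\ell \gamma_{i\ell}(x_m-\beta_i)^{-\ell}$ over $\overline{K(\hat{\vx}_m)}$ and differentiate term by term with respect to $x_m$. Because $D_{x_m}(\beta_i)=0$, one gets $D_{x_m}\bigl((x_m-\beta_i)^{-\ell}\bigr)=-\ell(x_m-\beta_i)^{-\ell-1}$, which has pole order at least $2$, while $D_{x_m}(q)$ is a polynomial. Hence no term of the form $(x_m-\beta)^{-1}$ can appear in $D_{x_m}(g)=f$, and by uniqueness of the expansion every $x_m$-residue of $f$ vanishes.

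For part (ii), I would differentiate the given expansion of $f$ with respect to $x_j$ using the product rule:
\[
D_{x_j}(f) = D_{x_j}(p) + \sum_i\sum_\ell\Bigl(\frac{D_{x_j}(\alpha_{i\ell})}{(x_m-\beta_i)^\ell} + \frac{\ell\,\alpha_{i\ell}\,D_{x_j}(\beta_i)}{(x_m-\beta_i)^{\ell+1}}\Bigr).
\]
Here the second family of terms all have pole order $\ell+1\ge 2$, so they contribute nothing to a simple pole; only the first family with $\ell=1$ does, contributing $D_{x_j}(\alpha_{i1})$ at $\beta_i$. Reading off the coefficient of $(x_m-\beta)^{-1}$ and invoking uniqueness of the partial fraction decomposition gives $\resid_{x_m}(D_{x_j}(f),\beta)=D_{x_j}(\alpha_{i1})=D_{x_j}(\resid_{x_m}(f,\beta))$, as claimed.

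The computations are routine once the setup is fixed, so the only point that genuinely requires care is the structural claim that $D_{x_j}$ extends uniquely to the algebraic closure $\overline{K(\hat{\vx}_m)}$ with $D_{x_j}(x_m)=0$, and that differentiation creates no spurious simple-pole terms — that is, that partial fractions and these derivations interact exactly as the formulas above suggest. I expect that to be the main, though mild, obstacle; everything else is bookkeeping about pole orders.
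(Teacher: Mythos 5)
Your proof is correct and follows essentially the same route as the paper: both arguments differentiate the partial fraction decomposition of $f$ over $\overline{K(\hat{\vx}_m)}$ term by term and track which terms can contribute a simple pole at $\beta$. The only cosmetic difference is that the paper first applies Hermite reduction so that only simple poles remain and then recognizes the resulting double-pole terms as exact $D_{x_m}$-derivatives (invoking part (i)), whereas you keep the full expansion and read off the coefficient of $(x_m-\beta)^{-1}$ directly by uniqueness; these amount to the same bookkeeping.
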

\begin{proof}
The first assertion follows by observing the effect of~$D_{x_m}$ on each term in the partial
fraction decomposition of~$g$.
By Hermite reduction, we can decompose~$f$ into
\[
 f = D_{x_m}(r) + \sum_{i=1}^n \frac{\alpha_i}{x_m-\beta_i}.
\]
By the first assertion, either~$\resid_{x_m}(f, \beta) = \alpha_i$ if~$\beta=\beta_i$
or~$\resid_{x_m}(f, \beta)=0$ if~$\beta \neq \beta_i$ for all~$i=1, \ldots, n$.
Applying~$D_{x_j}$ to the two sides of the equation above yields
\begin{align*}
D_{x_j}(f) &= D_{x_j}(D_{x_m}(r))+ \sum_{i=1}^n \left( \frac{D_{x_j}(\alpha_i)}{{x_m}-\beta_i} +
\frac{\alpha_iD_{x_j}(\beta_i)}{({x_m}-\beta_i)^2}\right)\\
&=D_{x_m}\left(D_{x_j}(r) - \sum_{i=1}^n \frac{\alpha_iD_{x_j}(\beta_i)}{{x_m}-\beta_i}\right)
+ \sum_{i=1}^n \frac{D_{x_j}(\alpha_i)}{{x_m}-\beta_i}.
\end{align*}
Then we have either~$\resid_{x_m}(D_{x_j}(f), \beta) = D_{x_j}(\alpha_i)$ if~$\beta=\beta_i$
or~$\resid_{x_m}(D_{x_j}(f), \beta)=0$ if~$\beta \neq \beta_i$ for all~$i=1, \ldots, n$.
The second assertion follows.
\end{proof}
If~$f$ is written as the form in~\eqref{EQ:hermite}, then we have
\[
 \resid_{x_m}(f, \beta_i)=\frac{a}{D_{x_m}(b)}\Big|_{{x_m}=\beta_i} \in K(\hat{\vx}_m)(\beta_i).
\]
Therefore, all the $x_m$-residues of~$f$ are roots of the Rothstein-Trager resultant
(see~\cite{Rothstein1977, Trager1976})
\[
 R := \resul_{x_m}(b, a-zD_{x_m}(b))\in K(\hat{\vx}_m)[z].
\]

\begin{lemma}\label{LEM:ratsol}
Let $f\in K(\vx)$. Then $f$ is rational integrable with respect
to~$\vx$ if and only if all the $x_m$-residues of~$f$ are algebraic integrable with respect
to~$\hat{\vx}_m$.
\end{lemma}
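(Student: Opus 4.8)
The plan is to work throughout with the partial fraction decomposition of $f$ in $x_m$ over the field $\overline{K(\hat{\vx}_m)}$ and to treat the two implications separately. First I would invoke the Hermite decomposition~\eqref{EQ:hermite}: since $D_{x_m}(r)$ is itself a derivative with respect to $x_m$, the function $f$ is rational integrable with respect to $\vx$ if and only if the reduced part is, so I may assume
\[
  f = \sum_{i=1}^n \frac{\alpha_i}{x_m - \beta_i}, \qquad \alpha_i = \resid_{x_m}(f, \beta_i),
\]
with the $\beta_i \in \overline{K(\hat{\vx}_m)}$ distinct. The $x_m$-residues of $f$ are then exactly the $\alpha_i$.

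For the direction ``rational integrable $\Rightarrow$ residues algebraic integrable'', suppose $f = \sum_{j=1}^m D_{x_j}(g_j)$ with $g_j \in K(\vx)$. Fixing a pole $\beta = \beta_i$ and using additivity of the residue together with Proposition~\ref{PROP:residue}, I would write
\[
  \alpha_i = \resid_{x_m}(f, \beta) = \sum_{j=1}^m \resid_{x_m}(D_{x_j}(g_j), \beta) = \sum_{j=1}^{m-1} D_{x_j}\bigl(\resid_{x_m}(g_j, \beta)\bigr),
\]
where the $j = m$ term vanishes by part~(i) and the remaining terms are rewritten by part~(ii). As each $\resid_{x_m}(g_j, \beta)$ lies in $\overline{K(\hat{\vx}_m)}$, this exhibits $\alpha_i$ as algebraic integrable with respect to $\hat{\vx}_m$.

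For the converse, assume every residue is algebraic integrable, say $\alpha_i = \sum_{j=1}^{m-1} D_{x_j}(\gamma_{ij})$ with $\gamma_{ij} \in \overline{K(\hat{\vx}_m)}$, and set $\tilde g_j = \sum_{i=1}^n \gamma_{ij}/(x_m - \beta_i)$. A direct computation, in which the choice $\sum_j D_{x_j}(\gamma_{ij}) = \alpha_i$ cancels all simple-pole contributions, gives
\[
  f - \sum_{j=1}^{m-1} D_{x_j}(\tilde g_j) = -\sum_{i=1}^n \frac{\sum_{j=1}^{m-1} \gamma_{ij} D_{x_j}(\beta_i)}{(x_m - \beta_i)^2}.
\]
The right-hand side has only double poles in $x_m$, so all its $x_m$-residues vanish; by the Hermite-reduction criterion that a rational function with zero $x_m$-residues is $D_{x_m}$ of a rational function, it equals $D_{x_m}(g_m)$ for some $g_m$, and rearranging yields $f = \sum_{j=1}^m D_{x_j}(g_j)$.

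The hard part is a descent issue, since the $\tilde g_j$ and $g_m$ produced above a priori lie only in $\overline{K(\hat{\vx}_m)}(x_m)$, whereas rational integrability demands certificates in $K(\vx)$. I would resolve this by choosing the antiderivatives $\gamma_{ij}$ Galois-equivariantly: the key observation is that each $\sigma \in \mathrm{Gal}(\overline{K(\hat{\vx}_m)}/K(\hat{\vx}_m))$ commutes with the derivations $D_{x_j}$ for $1 \le j \le m-1$, because a derivation of $K(\hat{\vx}_m)$ extends uniquely to any separable algebraic extension. Applying $\sigma$ to $\alpha_i = \sum_j D_{x_j}(\gamma_{ij})$ then shows that $\sigma(\gamma_{ij})$ is an admissible antiderivative for the conjugate residue at $\sigma(\beta_i)$. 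Selecting one antiderivative per Galois orbit of the $\beta_i$ and propagating it by the $\sigma$ makes each $\tilde g_j$, and hence the double-pole remainder, invariant under $\mathrm{Gal}(\overline{K(\hat{\vx}_m)}/K(\hat{\vx}_m))$; thus they descend to $K(\hat{\vx}_m)(x_m) = K(\vx)$, which finishes the argument. This is essentially the same trace mechanism alluded to just before the statement.
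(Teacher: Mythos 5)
Your proof is correct and follows essentially the same route as the paper: the same Hermite/partial-fraction decomposition, the same use of Proposition~\ref{PROP:residue} for the forward direction, and the same certificates $\tilde g_j=\sum_i\gamma_{ij}/(x_m-\beta_i)$ for the converse (your double-pole remainder is exactly $-D_{x_m}$ of the paper's explicit $g_m$-correction). The only difference is that you spell out the Galois-equivariant choice of the $\gamma_{ij}$ needed for the certificates to descend to $K(\vx)$, a point the paper compresses into the remark that the $\gamma_{i,j}$ may be taken in $K(\hat{\vx}_m)(\beta_i)$ and summed over conjugate roots.
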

\begin{proof}
By the Hermite reduction and partial fraction decomposition, $f$ can be written as
\[
 f = D_{x_m}(r) + \sum_{i=1}^n \frac{\alpha_i}{x_m-\beta_i},
\]
where~$r\in K(\vx)$, $\alpha_i, \beta_i\in \overline{K(\hat{\vx}_m)}$ and
the~$\beta_i$ are pairwise distinct.

{Suppose that} all the $x_m$-residues~$\alpha_i$ of~$f$ are algebraic integrable with
respect to~$\hat{\vx}_m$, i.e., $\alpha_i=\sum_{j=1}^{m-1} D_{x_j}(\gamma_{i, j})$
for some~$\gamma_{i, j}\in K(\hat{\vx}_m)(\alpha_i)$. Note that for each~$j$ we have
\[
  \frac{D_{x_j}(\gamma_{i,j})}{x_m-\beta_i} = D_{x_j}\left(\frac{\gamma_{i, j}}{x_m-\beta_i}\right)
  + D_{x_m}\left(\frac{\gamma_{i, j}D_{x_j}(\beta_i)}{x_m-\beta_i}\right).
\]
Then we get
\[
  \frac{\alpha_i}{ x_m-\beta_i} =  \sum_{j=1}^{m-1}D_{x_j}\left(\frac{\gamma_{i, j}}{x_m-\beta_i}\right)
  + D_{x_m}\biggl(\sum_{j=1}^{m-1}\frac{\gamma_{i, j}D_{x_j}(\beta_i)}{x_m-\beta_i}\biggr).
\]
Therefore, $f$ is rational integrable with respect to~$\vx$ by taking
\[
  g_j= \sum_{i=1}^n \frac{\gamma_{i,j}}{x_m-\beta_i} \quad \text{and}\quad
  g_m= r + \sum_{i=1}^n \sum_{j=1}^{m-1}\frac{\gamma_{i,j}D_{x_j}(\beta_i)}{x_m-\beta_i}.
\]
Note that all the~$g_j$ and~$g_m$ are in~$K(\vx)$ because~$\gamma_{i,j} \in K(\hat{\vx}_m)(\beta_i)$
and~$\beta_i$ are roots of a polynomial in~$K(\hat{\vx}_m)[x_m]$.

Suppose now that~$f$ is rational integrable with respect to~$\vx$, i.e.,
$f=\sum_{j=1}^{m}D_{x_j}(g_j)$ for some~$g_j\in K(\vx)$.  For any $i\in \{1, 2,
\ldots, n\}$, taking {the $x_m$-residues} of~$f$
and $\sum_{j=1}^{m}D_{x_j}(g_j)$, respectively, and using
Proposition~\ref{PROP:residue} we get
\[
 \resid_{x_m}(f, \beta_i) = \alpha_i = \sum_{j=1}^{m-1}D_{x_j}(\resid_{x_m}(g_j, \beta_i)),
\]
which implies that~$\alpha_i$ is algebraic integrable with respect to~$\hat{\vx}_m$.
\end{proof}

\begin{example}
Let~$f=1/(x_1 + x_2)$. Then the $x_2$-residue of~$f$ at~$-x_1$ is~$1$. Since~$1=D_{x_1}(x_1)$,
$f$ is rational integrable with respect to~$x_1$ and~$x_2$. More precisely,
\[f = D_{x_1}\left(\frac{x_1}{x_1+x_2}\right) + D_{x_2}\left(-\frac{x_1}{x_1+x_2}\right).\]
\end{example}

\begin{example}
Let~$f=1/(x_1x_2)$. Then the $x_2$-residue of~$f$ at~$0$ is~$1/{x_1}$. Since~$1/{x_1}$ has no
antiderivative in~$\overline{K(x_1)}$, $f$ is not rational integrable with respect to~$x_1$ and~$x_2$.
\end{example}

\subsection{Equivalence}

\begin{theorem}\label{THM:equiv}
Let~$f\in k(t, \vx)$.
Then~$L\in k(t)\langle D_t \rangle$ is
a telescoper for~$f$ if and only if~$L$ is a telescoper for every $x_m$-residue of~$f$.
\end{theorem}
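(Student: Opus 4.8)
The plan is to reduce the statement to Lemma~\ref{LEM:ratsol} by showing that applying~$L$ commutes with taking $x_m$-residues. Write $K=k(t)$ and, using Hermite reduction and partial fractions over~$\overline{K(\hat{\vx}_m)}$, put
\[
 f = D_{x_m}(r) + \sum_{i=1}^n \frac{\alpha_i}{x_m-\beta_i},
\]
with $r\in K(\vx)$, $\alpha_i,\beta_i\in\overline{K(\hat{\vx}_m)}$ and the~$\beta_i$ pairwise distinct, so that $\alpha_i=\resid_{x_m}(f,\beta_i)$.

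First I would establish the key identity
\[
 \resid_{x_m}(L(f),\beta_i)=L(\alpha_i)\qquad(1\le i\le n).
\]
By linearity of both sides in~$L$ it suffices to handle the generators of $k(t)\langle D_t \rangle$. Multiplication by $c\in k(t)$ commutes with $\resid_{x_m}$, since~$c$ is free of~$x_m$ and leaves the partial-fraction shape in~$x_m$ untouched. For~$D_t$ I would prove the stronger statement that
\[
 \resid_{x_m}(D_t(g),\beta)=D_t(\resid_{x_m}(g,\beta))
\]
holds for \emph{every} $g\in K(\vx)$, not only for those with simple poles. Expanding $g$ fully in partial fractions at~$\beta$ and differentiating, a term $a_j/(x_m-\beta)^j$ produces $D_t(a_j)/(x_m-\beta)^j$ together with a multiple of $1/(x_m-\beta)^{j+1}$ coming from $D_t(\beta)$; only the $j=1$ piece feeds the $1/(x_m-\beta)$ coefficient, giving exactly $D_t(a_1)=D_t(\resid_{x_m}(g,\beta))$. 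This is the $D_t$-analogue of Proposition~\ref{PROP:residue}(ii), and, being valid for arbitrary pole orders, it may be iterated. Combined with the scalar case and linearity it yields the identity for every $L\in k(t)\langle D_t \rangle$. I would also note that $L$ creates no new $x_m$-poles (the $t$-operations only possibly raise the orders of the existing poles at the~$\beta_i$), so the $x_m$-residues of $L(f)$ occur only among the~$\beta_i$ and are precisely the~$L(\alpha_i)$, some of which may vanish.

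With the identity in hand I would chain the definitions. By definition, $L$ is a telescoper for~$f$ exactly when $L(f)$ is rational integrable with respect to~$\vx$. Lemma~\ref{LEM:ratsol} recasts this as the condition that every $x_m$-residue of $L(f)$ be algebraic integrable with respect to~$\hat{\vx}_m$. Since those residues are the~$L(\alpha_i)$ by the identity above, the condition reads: each $L(\alpha_i)$ is algebraic integrable with respect to~$\hat{\vx}_m$, which is precisely the assertion that $L$ is a telescoper for every $x_m$-residue~$\alpha_i$ of~$f$. Here $\alpha_i\in\overline{K(\hat{\vx}_m)}=\overline{k(t,\hat{\vx}_m)}$, the correct domain for Problem~\ref{PB:bialg} in the $m-1$ variables~$\hat{\vx}_m$, so the two notions of telescoper match up. This gives the desired equivalence.

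The main obstacle is the commutation identity for~$D_t$, and in particular checking that it survives iteration: a single application of $D_t$ turns simple poles into double poles, so one cannot reuse Proposition~\ref{PROP:residue}(ii) as a black box but must verify that higher-order pole terms never contaminate the residue. Once the identity is available for arbitrary~$g$, everything else is a direct translation through Lemma~\ref{LEM:ratsol}.
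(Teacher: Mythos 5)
Your proof is correct and follows essentially the same route as the paper's: the commutation identity $\resid_{x_m}(L(f),\beta)=L(\resid_{x_m}(f,\beta))$ combined with Lemma~\ref{LEM:ratsol}. The only differences are cosmetic --- you verify the $D_t$-commutation for poles of arbitrary order explicitly (which the paper dismisses with ``by a similar calculation as in the proof of Proposition~\ref{PROP:residue}''), and you invoke the biconditional of Lemma~\ref{LEM:ratsol} in both directions, whereas the paper writes out the certificates for the forward direction directly via Proposition~\ref{PROP:residue}.
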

\begin{proof}
By a similar calculation as in the proof of Proposition~\ref{PROP:residue}, we have
\begin{equation}\label{EQ:teleres}
L(\resid_{x_m}(f, \beta)) = \resid_{x_m}(L(f), \beta)
\end{equation}
for any~$L\in k(t)\langle D_t \rangle$ and~$\beta\in \overline{k(t, \hat{\vx}_m)}$.
If~$L\in k(t)\langle D_t \rangle$ is a telescoper for~$f$, then~$L(f)=\sum_{j=1}^m D_{x_j}(g_j)$
for some~$g_j\in k(t,\vx)$. By Proposition~\ref{PROP:residue} and Equation~\eqref{EQ:teleres},
for the $x_m$-residue $\alpha:=\resid_{x_m}(f, \beta)$ {at} any pole~$\beta$ of~$f$ with respect to~$x_m$,
we have
\[
 L(\alpha) = \sum_{j=1}^{m-1} D_{x_j}(\resid_{x_m}(g_j, \beta)).
\]
So~$L$ is a telescoper for~$\alpha$. Conversely, assume that~$L$ is a telescoper for
any $x_m$-residue of~$f$. Note that any $x_m$-residue of~$L(f)$ is of the form
$L(\resid_{x_m}(f, \beta))$, which is algebraic integrable by assumption. Then~$L(f)$
is rational integrable by Lemma~\ref{LEM:ratsol}. Therefore, $L$ is a telescoper for~$f$.
\end{proof}

Now we can present an explicit translation
between the two telescoping problems by using Theorem~\ref{THM:equiv}.

If we can solve Problem~\ref{PB:bialg}, then for a rational function~$f\in k(t, \vx)$,
first, we can perform Hermite reduction to decompose~$f$ into~$f=D_{x_m}(r)
+ a/b$; second, we compute the resultant~$R:= \resul_{x_m}(a-zD_{x_m}(b), b)\in k(t, \hat{\vx}_m)[z]$;
finally, we get a telescoper for~$f$ by constructing telescopers for all
the roots of~$R$ in~$\overline{k(t, \hat{\vx}_m)}$ and taking their least common left multiple.

On the other hand, if we can solve Problem~\ref{PB:trirat}, then for an
algebraic function~$\alpha\in \overline{k(t, \hat{\vx}_m)}$ with minimal
polynomial~$F\in k[t,\hat{\vx}_m, x_m]$, we compute a telescoper~$L$ for the rational
function $f=x_mD_{x_m}(F)/F$. Note that $\alpha$ is the $x_m$-residue of~$f$ at~$\alpha$.
Therefore, $L$ is a telescoper for~$\alpha$.

\begin{example}\label{ex:8} Consider the rational function
\[
  f=\frac{2y (1-x) x (x+1) (x+2) (t+x) (x y-y-t^4)}
           {1-x(2-x+(x+1) (x+2) (t+x) (xy-y-t^4)^2)}.
\]
In order to find a telescoper for~$f$, we view $f$ as a rational function in $y$
with coefficients in $k(t,x)$ and determine its residues in $\overline{k(t,x)}$.
Write $a$ and $b$ for the numerator and denominator of~$f$.
Since $b$ is squarefree, the residues $z$ of $f$ are precisely the roots
of the the Rothstein-Trager resultant $\resul_y (a - z D_y(b), b)\in k(t,x)[z]$.
In the present example, these are
\[
  \frac{t^4}{x-1} \pm\frac1{\sqrt{x(x+1)(x+2)(x+t)}}.
\]
According to Theorem~\ref{THM:equiv}, it now suffices to find a telescoper for this algebraic function.
This problem is discussed in the following section.
\end{example}

\section{Telescopers for algebraic\hskip0pt plus1fill\break functions}\label{SEC:alg}

We showed above how focusing on residues can yield a technique to find
telescopers of rational functions by reducing this question to a
similar one for algebraic functions. In this section we describe an algorithm to
solve this latter problem for algebraic functions of two variables. In what follows,
the term ``algebraic function'' will always refer to functions of two variables $t$~and $x$.
When one tries to use residues to  solve the
problem of finding telescopers for algebraic functions one must deal with
several complications.  The first is a technical complication.  One does not
have a global way of expressing a function similar to partial fractions and so
must rely on local expansions.  This forces one to look at {differentials}
rather than functions in order to define the notion of residue in a manner that
is independent of local coordinates.  The second complication is a more
substantial one. There are {differentials} $\alpha dx$ having zero residues
everywhere that are not of the form $d\beta = D_x(\beta)dx$, i.e. $\alpha$ is
not the derivative of an algebraic function. Nonetheless, one knows that there
must exist an operator $L \in k(t)\langle D_t \rangle$ of order equal to twice
the genus of the curve associated to $f$ such that $L(\alpha)dx = d\beta$ for
some algebraic~$\beta$.  This will force us to add an additional step to find
our desired telescoper.  In Section~\ref{sec:derdiff}, we will gather some facts
concerning differentials in function fields of one variable that will be needed
in our algorithm. In Section~\ref{sec:alg} we describe the algorithm.

\subsection{Derivations and Differentials}\label{sec:derdiff}

In this section we review some notation and facts concerning function fields of
one variable ({cf.} \cite{artin,bliss,chevalley51,eichler,manin58}). In the
previous section the results and calculations depended heavily on the notion of
the residue of a rational function of $y$ at an algebraic function $\beta_i$ of
$x$.  In the present section we shall also need to use the notion of a residue
but since we are dealing with algebraic functions instead of rational functions,
the appropriate notion is that of a residue of a differential $\omega$ at a
place $\cP$ of the associated function field~$E$.  We will denote this by
$\resid_\cP\omega$ and refer to the above mentioned books for basic definitions
and properties. We note that when $f \in E = \overline{K(x)}(y)$, and $\beta_i
\in \overline{K(x)}$, then $\resid_y(f,\beta_i) = \resid_\cP\omega$, where $\omega = fdx$ and $\cP$
is the place $(y-\beta_i)$ of~$E$.

Let $K$ be a differential field of charactersitic zero with derivation denoted
by $D_t$ (for example, $K = k(t)$ with $D_t$ as above).  Let $x$ be
transcendental over $K$ and $E = K(x,y)$ an algebraic extension of $K(x)$. We
may extend the derivation $D_t$ to a derivation $\Dtx$ on $K(x)$ by first
letting $\Dtx(x) = 0$ and then taking the unique extension to $E$.  We define a
derivation $D_x$ on $K(x)$ by letting $D_x$ be zero on $K$, $D_x(x) = 1$ and
taking the unique extension of $D_x$ from $K(x)$ to $E$.  We shall also assume
that the constants $E^{D_x} = \{c\in E \mid D_x(c) = 0 \}$ are precisely
$K$. This is equivalent to saying that the minimal polynomial of
$y$ over $K(x)$ is absolutely irreducible ({cf.}~\cite{duval91}).
In \cite{chevalley51}, Chapter~VI, \S7, Chevalley shows that $\Dtx$ can be used
to define a map (which we denote again by $\Dtx$) on differentials such that
$\Dtx(fdx) = (\Dtx(f))dx$. The map $\Dtx$ furthermore has the following
properties:
\begin{enumerate}
\item $\Dtx(dg) = d(\Dtx g)$ for any $g \in E$, and
\item  for any place $\cP$ of $E$ and any differential $\omega$,
  \[
    \resid_\cP(\Dtx\omega) = \Dtx(\resid_\cP(\omega)).
  \]
\end{enumerate}
Given $\alpha \in E$ we will want to find an operator $L \in K\langle
\Dtx\rangle$ and an element $\beta\in E$ such that $L(\alpha) = D_x(\beta)$. In
terms of differentials, this latter equation may be written as $L(\omega) =
d\beta$, where $\omega = \alpha dx$.

We shall have occasion to write our field $E$ as $E = K(\bar{x}, \bar{y})$ for
some other $\bar{x}$ which is transcendental over $K$ and $\bar{y}$ algebraic over
$K(\bar{x})$ and work with the derivation $\Dtxbar$ defined in a similar manner
as above.  We will need to know that if we can find a telescoper with respect to
the derivation $\Dtxbar$ then we can convert this into a telescoper with respect
to~$\Dtx$. The following lemma and proposition allow us to do this.

\begin{lemma}\label{der:lem1}
  Let $x$ and $\bar{x}$ be as above and let $\omega$ be a differential of~$E$.
  For any $i =1, 2,\ldots$ there exists $u_i \in E$ such that
\begin{equation}\label{eqn3}
  (\Dtxbar)^i(\omega) - (\Dtx )^i(\omega)  = du_i.
\end{equation}
\end{lemma}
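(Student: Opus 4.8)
The plan is to reduce everything to understanding the single derivation $\Delta := \Dtxbar - \Dtx$, both as a derivation of $E$ and as a map on differentials, and to show that $\Delta$ turns every differential into an exact one. The general statement for arbitrary $i$ then follows by induction, using property~(1) that $\Dtxbar$ commutes with $d$.

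First I would identify $\Delta$. Since $\Dtxbar$ and $\Dtx$ both restrict to $D_t$ on $K$, their difference $\Delta$ is a derivation of $E$ vanishing on $K$, i.e.\ a $K$-derivation. As $K$ has characteristic zero and $E/K$ has transcendence degree one with $x$ as a separating transcendence basis, the space $\mathrm{Der}_K(E)$ is one-dimensional over $E$ and spanned by $D_x$. Evaluating at $x$ gives $\Delta(x) = \Dtxbar(x) - \Dtx(x) = \Dtxbar(x) =: c$, whence $\Delta = c\,D_x$ on $E$.

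Next comes the base case $i=1$, which I expect to be the heart of the argument. Writing an arbitrary differential as $\omega = f\,dx$ with $f\in E$ and using $\Dtx(x)=0$ together with property~(1), one obtains
\begin{align*}
  \Dtxbar(\omega) - \Dtx(\omega)
    &= \bigl(\Dtxbar(f) - \Dtx(f)\bigr)\,dx + f\,d(\Dtxbar x) \\
    &= c\,D_x(f)\,dx + f\,D_x(c)\,dx = d(cf),
\end{align*}
where I used $\Delta(f) = c\,D_x(f)$ from the previous step and the identity $dh = D_x(h)\,dx$ valid for every $h\in E$. Thus $u_1 = cf$ works.

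For the induction step, assume $(\Dtxbar)^i(\omega) - (\Dtx)^i(\omega) = du_i$ and abbreviate $\omega_i := (\Dtx)^i(\omega) = f_i\,dx$. Splitting
\[
  (\Dtxbar)^{i+1}(\omega) - (\Dtx)^{i+1}(\omega)
   = \Dtxbar\bigl((\Dtxbar)^i(\omega) - \omega_i\bigr) + \bigl(\Dtxbar(\omega_i) - \Dtx(\omega_i)\bigr),
\]
the first summand equals $\Dtxbar(du_i) = d(\Dtxbar u_i)$ by property~(1), and the second equals $d(c f_i)$ by the base case; hence $u_{i+1} = \Dtxbar(u_i) + c f_i$ closes the induction. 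The main obstacle is really just the base case: because $\Dtx$ annihilates $dx$ while $\Dtxbar$ does not, the difference $\Dtxbar(\omega)-\Dtx(\omega)$ acquires the extra term $f\,d(\Dtxbar x)$, and the whole point is that knowing $\Delta = c\,D_x$ makes the two contributions recombine into the single exact differential $d(cf)$.
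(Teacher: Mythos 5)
Your proof is correct, and it reaches the same recursion as the paper ($u_{i+1}=\Dtxbar(u_i)+cf_i$ matches the paper's $u_{i+1}=\Dtxbar(u_i)-v_i\Dtx(\bar x)$ once you note $c=\Dtxbar(x)=-\Dtx(\bar x)D_{\bar x}(x)$ and $f_i=(\Dtx)^i(\bar\alpha D_x(\bar x))$), but you treat the base case differently. The paper simply cites Lemma~1 of Manin (equivalently Lemma~3 in Chevalley, Ch.~VI, \S7) for the identity $\Dtxbar(\omega)-\Dtx(\omega)=-d(\bar\alpha\Dtx(\bar x))$, whereas you derive it from scratch by observing that $\Delta:=\Dtxbar-\Dtx$ is a $K$-derivation, hence equals $c\,D_x$ with $c=\Dtxbar(x)$, and that the two contributions $c\,D_x(f)\,dx$ and $f\,D_x(c)\,dx$ recombine into $d(cf)$; your $u_1=cf$ agrees with the paper's $-\bar\alpha\Dtx(\bar x)$. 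This buys a self-contained argument at the cost of one unstated ingredient: the line $\Dtxbar(f\,dx)=\Dtxbar(f)\,dx+f\,d(\Dtxbar x)$ uses a Leibniz rule for the action of $\Dtxbar$ on differentials, which is not among the two properties the paper lists. It does hold --- from the definition $\Dtxbar(g\,d\bar x)=(\Dtxbar g)\,d\bar x$ one gets $\Dtxbar(f\,dx)=\Dtxbar(f)\,dx+f\,\Dtxbar(D_{\bar x}(x))\,d\bar x$, and $\Dtxbar$ commutes with $D_{\bar x}$ (their commutator is a derivation vanishing on $K(\bar x)$), so the last term is $f\,d(\Dtxbar x)$ --- but you should make this justification explicit rather than attributing the whole line to property~(1). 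The induction step is identical to the paper's in substance: split off $\Dtxbar$ applied to the exact difference (an exact differential by property~(1)) plus a fresh instance of the base case applied to $(\Dtx)^i(\omega)$.
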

\begin{proof}
  Write $\omega = \bar{\alpha} d\bar{x}$. Lemma~1 of \cite{manin58} (see also
  Lemma~3 in Chapter~VI, \S7 of \cite{chevalley51}) implies that
  \begin{equation}\label{eqn2}
    \Dtxbar(\omega) - \Dtx(\omega)  = -d(\bar{\alpha} \Dtx(\bar{x})).
  \end{equation}
  Letting $u_1 =  -\bar{\alpha}\Dtx(\bar{x})$, we have equation \eqref{eqn2} for $i = 1$.
  One can verify by induction that \eqref{eqn2} holds for $u_{i+1} = \Dtxbar(u_i) -v_i\Dtx(\bar{x})$,
  where $v_i = (\Dtx )^i[\bar{\alpha} D_{x}\bar{x}]\cdot D_{\bar{x}}(x).$
\end{proof}


\begin{prop}\label{der:prop} Let $\alpha \in E$, $\omega = \alpha dx$,
\[
  (\Dtxbar)^n + a_{n-1}(\Dtxbar)^{n-1} + \ldots + a_0 \in  K\langle \Dtxbar \rangle,
\]
and $\bar{\beta} \in E$ such that
\[
  \bigl((\Dtxbar)^n + a_{n-1}(\Dtxbar)^{n-1} + \ldots + a_0\bigr)(\omega)= d\bar{\beta}.
\]
One can effectively find $\beta \in E$ such that
\[
  \bigl((\Dtx)^n + a_{n-1}(\Dtx)^{n-1} + \ldots + a_0\bigr)(\alpha)= D_x(\beta).
\]
\end{prop}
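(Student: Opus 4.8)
The plan is to transfer the exactness of $L(\omega)$ from the $\Dtxbar$-world to the $\Dtx$-world term by term, using Lemma~\ref{der:lem1} to control the discrepancy between $(\Dtxbar)^i$ and $(\Dtx)^i$ on $\omega$. Write $L = (\Dtxbar)^n + a_{n-1}(\Dtxbar)^{n-1} + \cdots + a_0$ and let $\tilde L = (\Dtx)^n + a_{n-1}(\Dtx)^{n-1} + \cdots + a_0$ denote the same operator with $\Dtx$ substituted for $\Dtxbar$. Since each $a_i \in K$ acts as a scalar on differentials, I would first record
\[
  L(\omega) - \tilde L(\omega) = \sum_{i=0}^n a_i\bigl((\Dtxbar)^i(\omega) - (\Dtx)^i(\omega)\bigr),
\]
where $a_n = 1$ and the $i=0$ summand vanishes.

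Next, for each $i \ge 1$ Lemma~\ref{der:lem1} supplies an element $u_i \in E$ with $(\Dtxbar)^i(\omega) - (\Dtx)^i(\omega) = du_i$; set $u_0 = 0$. The key step is then to observe that, because $a_i \in K$ and $D_x$ vanishes on $K$, one has $D_x(a_i u_i) = a_i D_x(u_i)$, so that $a_i\,du_i = d(a_i u_i)$. Summing over $i$ gives
\[
  L(\omega) - \tilde L(\omega) = d\Bigl(\sum_{i=0}^n a_i u_i\Bigr).
\]
Combining this with the hypothesis $L(\omega) = d\bar\beta$, I would set $\beta = \bar\beta - \sum_{i=0}^n a_i u_i \in E$ and conclude $\tilde L(\omega) = d\beta$.

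Finally, I would translate the differential identity back to functions. Iterating the property $\Dtx(f\,dx) = (\Dtx f)\,dx$ yields $\tilde L(\omega) = \bigl(\tilde L(\alpha)\bigr)\,dx$, while $d\beta = D_x(\beta)\,dx$; equating the coefficients of $dx$ (differentials being a one-dimensional $E$-space) produces exactly $\bigl((\Dtx)^n + a_{n-1}(\Dtx)^{n-1} + \cdots + a_0\bigr)(\alpha) = D_x(\beta)$. Effectiveness is automatic, since the $u_i$ are delivered by the explicit recursion in the proof of Lemma~\ref{der:lem1}, so $\beta$ is obtained by a finite constructive computation.

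The hard part will be the passage $a_i\,du_i = d(a_i u_i)$: it is precisely here that one must invoke $D_x|_K = 0$, that is, that the $a_i$ are constants for the $x$-differential. Without this, the correction $\sum_i a_i u_i$ would not be a single exact differential and the argument would collapse. Everything else — commuting the scalars $a_i$ past the derivations on differentials, and extracting the coefficient of $dx$ in the last step — is routine bookkeeping.
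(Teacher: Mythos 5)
Your argument is correct and is essentially identical to the paper's own proof: both apply Lemma~\ref{der:lem1} termwise, use the fact that the $a_i$ lie in $K$ (so $a_i\,du_i = d(a_i u_i)$) to collect the corrections into a single exact differential, and set $\beta = \bar\beta - \sum_i a_i u_i$. No issues.
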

\begin{proof}  From Lemma~\ref{der:lem1} we have that
\begin{alignat*}1
  &\bigl((\Dtxbar)^n + a_{n-1}(\Dtxbar)^{n-1} + \ldots + a_0\bigr)(\omega) \\
  &{}=((\Dtx)^n(\omega)+du_n)+ a_{n-1}((\Dtx)^{n-1}(\omega)+du_{n-1})\\
  &\qquad{}+ \ldots + a_0\omega.
\end{alignat*}
Therefore, taking into account that the $a_i$ belong to~$K$,
\begin{alignat*}1
 &\bigl((\Dtx)^n  + a_{n-1}(\Dtx)^{n-1}  + \ldots + a_0\bigr)(\omega) \\
 &= d\bigl(\bar{\beta} - u_n - a_{n-1}u_{n-1} - \ldots -a_1u_1\bigr),
\end{alignat*}
which implies the conclusion of the proposition with $\beta = \bar{\beta} - u_n
- a_{n-1}u_{n-1} - \ldots -a_1u_1$.
\end{proof}
In the algorithm described in the next section, we will consider a differential $\omega$ in $E = K(x,y)$ and assume that
\begin{enumerate}
\item $\omega$ has no poles at any place above the place of $K(x)$ at infinity,  and
\item  the places where $\omega$ does have a pole are all unramified above places of $K(x)$.
\end{enumerate}
We describe below an algorithm that allows one to select an $\bar{x} \in E$ such that $E = K(\bar{x}, y)$ and  such that $\omega$ satisfies conditions 1.~and 2.~above with respect to $K(\bar{x})$.  The algorithm of Section~\ref{sec:alg} can be used to produce a telescoper with respect to $\Dtxbar$ and Proposition~\ref{der:prop} allows one to convert this telescoper to a telescoper with respect to $\Dtx$. In the  following proposition, the proof that condition 2.~can be fulfilled was outlined to us by Barry Trager \cite{trager_thesis,trager_personal}.
\begin{prop}\label{der:ramified} Let $\omega$ be a differential in $E = K(x,y)$.  One can effectively find an $\bar{x} \in E$ such that $E = K(\bar{x}, y)$ and
\begin{enumerate}
\item $\omega$ has no poles at any place above the place of $K(\bar{x})$ at infinity,  and
\item  the places where $\omega$ does have a pole are all unramified above places of $K(\bar{x})$.
\end{enumerate}
\end{prop}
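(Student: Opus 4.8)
The plan is to reduce both conditions to finitely many \emph{local} constraints on the new coordinate $\bar{x}$ at the poles of $\omega$, and then to satisfy these constraints simultaneously by a weak-approximation argument, correcting for primitivity afterwards. First I would observe that $\omega$ has only finitely many poles; call them $\cP_1,\ldots,\cP_r$, a finite set of places of $E$. Conditions 1 and 2 constrain the covering $E/K(\bar{x})$ only at these places, so I would restate them locally. A place $\cP_i$ lies above the place at infinity of $K(\bar{x})$ exactly when $\bar{x}$ has a pole at $\cP_i$; hence condition 1 is equivalent to saying that $\bar{x}$ is finite at every $\cP_i$. For condition 2, recall that in characteristic zero a finite place $\cP_i$ is unramified over $K(\bar{x})$ precisely when $\bar{x}-\bar{x}(\cP_i)$ is a local uniformizer at $\cP_i$, equivalently when $d\bar{x}$ does not vanish at $\cP_i$ (one has $v_{\cP_i}(d\bar{x})=e_i-1$, where $e_i$ is the ramification index). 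Thus both conditions hold as soon as $\bar{x}$ takes a finite value at each $\cP_i$ with $v_{\cP_i}(\bar{x}-\bar{x}(\cP_i))=1$.

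Next I would construct such an $\bar{x}$. Choosing a local uniformizer $s_i$ at each $\cP_i$, the weak approximation theorem for the pairwise independent valuations $v_{\cP_1},\ldots,v_{\cP_r}$ produces an element $\bar{x}\in E$ with $v_{\cP_i}(\bar{x}-s_i)\ge 2$ for every $i$. Such an $\bar{x}$ is finite and has $v_{\cP_i}(\bar{x})=1$ at each $\cP_i$, so it satisfies conditions 1 and 2 above; any other poles of $\bar{x}$ occur away from $S=\{\cP_1,\ldots,\cP_r\}$ and are irrelevant, since they are not poles of $\omega$. As $\bar{x}$ is nonconstant it is transcendental over $K$, so $E/K(\bar{x})$ is a finite algebraic extension. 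Computing places, uniformizers, and approximants in a function field of one variable is effective (cf.~\cite{trager_thesis}), which accounts for the effectivity claim.

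It remains to guarantee the primitivity $E=K(\bar{x},y)$. Here I would perturb: fix $\pi\in E$ vanishing to order $\ge 2$ at each $\cP_i$ (again by approximation) and set $\bar{x}_c=\bar{x}+c\,\pi$ for $c\in K$. Every $\bar{x}_c$ satisfies the same order-one local conditions at the $\cP_i$, since adding $c\pi$ alters $\bar{x}$ only modulo $\cP_i^2$. By the primitive-element argument applied to the separable extension $E/K(y)$, the element $\bar{x}_c$ generates $E$ over $K(y)$ --- equivalently $K(\bar{x}_c,y)=E$ --- for all but finitely many $c$, so a suitable $c\in K$ yields the required $\bar{x}$.

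The main obstacle is condition 2, the step attributed to Trager: the crux is the translation of ``unramified over $K(\bar{x})$'' into the first-order local statement that $d\bar{x}$ is nonvanishing (so that $\bar{x}-\bar{x}(\cP_i)$ is a uniformizer), and then the demand that this hold at \emph{all} poles of $\omega$ at once. Once this local reformulation is secured, a single approximation step disposes of conditions 1 and 2 together, and the only remaining care is to reconcile the local prescription with the global primitivity $E=K(\bar{x},y)$, which the perturbation $\bar{x}\mapsto\bar{x}+c\pi$ handles without disturbing the local data.
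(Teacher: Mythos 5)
Your route is genuinely different from the paper's. The paper first applies a M\"obius substitution $\bar x=cx/(x-c)$ to clear poles away from infinity and then takes $\bar x=x+my$ for a suitable integer $m$, identifying the finitely many bad $m$ as roots of an explicit resultant $S(M)$ built from an integral basis and its complementary basis (following Eichler and Trager); for such $\bar x$ the primitivity $E=K(\bar x,y)$ is automatic. You instead translate both conditions into first-order local data at the finitely many poles $\cP_1,\dots,\cP_r$ of $\omega$ --- namely that $\bar x$ is finite there and $\bar x-\bar x(\cP_i)$ is a uniformizer, equivalently $v_{\cP_i}(d\bar x)=0$ --- and realize these by weak approximation. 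That local reformulation is correct in characteristic zero and is essentially the same geometric fact the paper invokes (ramification means the projection direction is tangent to the curve at the pole), and your construction does dispose of conditions 1 and 2 correctly and effectively.

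The gap is in the primitivity repair. You assert that $\bar x_c=\bar x+c\pi$ generates $E$ over $K(y)$ for all but finitely many $c$ ``by the primitive-element argument,'' but that argument requires $K(y)(\bar x,\pi)=E$, which you have not arranged: if $\bar x$ and $\pi$ both lie in a proper intermediate field $F$ with $K(y)\subseteq F\subsetneq E$, then $K(\bar x_c,y)\subseteq F$ for \emph{every} $c$, and no choice of $c$ rescues primitivity. (In the degenerate case $y\in K$, so that $E=K(x)$, the perturbation cannot help at all: there $E=K(\bar x,y)=K(\bar x)$ forces $\bar x$ to have a single simple pole, which a weak-approximation element need not have.) The step is repairable --- for instance, choose $\pi$ so that in addition $K(y)(\bar x,\pi)=E$, or simply test whether $K(\bar x,y)=E$ and re-approximate with stronger local conditions if not --- but as written the ``all but finitely many $c$'' claim is unjustified. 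Two further points of comparison: the paper's resultant $S(M)$ yields an explicit finite excluded set of integers, which is what underlies its later remark about retrying with a random $m$, whereas your effectivity rests on computing places, uniformizers and approximants in $E$ (legitimate, but heavier machinery); and the paper's choice $\bar x=x+my$ keeps the change of coordinates syntactically trivial, which matters for the subsequent conversion of telescopers via Proposition~\ref{der:prop}.
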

\begin{proof} If 1.~does not hold, let $c \in K$ be selected so that $\omega$ has no poles above $x=c$, let
\[\bar{x} = \frac{cx}{x-c}.\]
This change of variables interchanges $c$ and the point at infinity, so 1.~is now satisfied with respect to $K(\bar{x})$ and we shall henceforth abuse notation and assume that 1.~is satisfied with respect to $K(x)$.

Let $\cC$ be a nonsingular curve that is a model of $E$.  The elements of $E$
can be considered as functions on $\cC$. As noted in
\cite[p.~63]{trager_thesis}, ramification occurs when the line of projection
from the curve down to the $x$-axis is tangent to the curve and, for each pole
of $\omega$, there are only a finite number of projection directions that are
tangent to the curve at this pole.  Therefore for all but finitely many choices
of {an} integer~$m$, if we let $\bar{x} = x+my$, $\omega$ will satisfy 2.~with
respect to $K(\bar{x})$.  One can refine this argument and produce a finite set
of integers $m$ that are to be avoided. This is done in the following way.

Let $M$ be an indeterminate and consider the field $E_1 = E(M) = k_1(\bar{x},y)$,
where $k_1 = K(M)$ and $\bar{x} = x+My$. Let $\frako = K[M]$ and assume that
(after a possible change of $y$), $y$ satisfies a monic polynomial over
$\frako[\bar{x}]$.  The behavior of various objects in $E_1$ when one reduces
$\frako$ modulo a prime ideal of $\frako$ is considered in \cite[Chapter~III,
\S6]{eichler}. We shall be interested in reducing modulo ideals of the form
$(M-m)$, where $m$ is an integer.  One can effectively calculate an integral
basis $\{w_i(M)\}$ of the integral closure of $k_1[\bar{x}]$ in $E_1$ ({cf.}
\cite{hoeij_integral,trager_thesis}) and from this a complementary basis
$\{w'_i(M)\}$ (\cite[Chapter~5, \S2]{artin}, \cite[\S22]{bliss}).  In
Chapter~III \S6.2 of \cite{eichler}, Eichler gives a method that will produce a
finite set $S \subset \bZ$ such that for $m \notin S$, the set $\{w_i(m)\}$ is
again an integral basis of the integral closure of $K[\bar{x}]$ in $E$.  This
method can be refined (and the set $S$ slightly increased if need be) so that
$\{w'_i(m)\}$ is also a complementary basis.  Expressing $\omega$ in terms of
this complementary basis,
\[
  \omega = \frac{1}{b(\bar{x})}\sum_{i=1}^n p_i(M,\bar x) w'_i(M) d\bar x,
\]
one sees that $\omega$
will have poles precisely at the zeroes of $b(\bar{x})$.  If one selects $m \in
\bZ$ such that $b(\bar{x})$ is relatively prime to $D(\bar{x})$, the discriminant
of the integral basis $\{w_i(m)\}$, then $\omega$ will not have poles at
ramification points.  The {finitely many} values of $m$ that do not satisfy
this latter condition are roots of
\begin{alignat*}1
  S(M) = \resul_X(&\resul_Y(b(X+MY), F(X,Y)),\\
                  &\resul_Y(D(X+MY), F(X,Y))),
\end{alignat*}
where $F \in K[X,Y]$ is the minimal polynomial of $y$ over $K(x)$.\end{proof}

\subsection{An Algorithm to Calculate Telescopers for Algebraic
  Functions}\label{sec:alg} We assume we are given a function field of one
variable $E = K(x,y)$ and a differential $\omega$ in $E$.  We shall furthermore
assume that $\omega$ satisfies conditions 1.~and 2.~of
Proposition~\ref{der:ramified}.  We will describe an algorithm to find $a_0,
\ldots, a_n \in K$, not all zero, and $\beta \in E$ such that
\[
(a_n(\Dtx)^n + a_{n-1}(\Dtx)^{n-1} + \ldots + a_0)(\omega) = d\beta.
\]
If $\omega = \alpha dx$, then $L = a_n(\Dtx)^n + a_{n-1}(\Dtx)^{n-1} + \ldots +
a_0$ is a telescoper for $\alpha$ with certificate $\beta$.  The algorithm has
two steps.  The first step finds an operator $L_1$ such that applying this
operator to $\omega$ results in a differential $L_1(\omega)$ with only zero residues.  The
second step finds an operator $L_2$ of order at most twice the genus of $E$ and
an element $\beta \in E$ such that $L_2(L_1(\omega)) = d\beta$.

\medskip
\noindent \textbf{Step 1.}  We will describe two methods for constructing an
operator that annihilates the residues of $\omega$.  The first one requires one
to calculate in algebraic extensions of $K$ while the second only requires
calculations in~$K$. Throughout, let $F(x,Y)\in K[x,Y]$ be a minimal polynomial of $y$
over $K(x)$ and let
\[
  \omega=\alpha dx = \frac AB dx
\]
for some $A\in K[x,y]$ with no finite poles and $B\in K[x]$.

\smallskip
\noindent {\it Method 1.}  We make no assumptions concerning ramification at the
poles but for convenience we do assume that the poles of $\omega$ only occur at
finite points. Let $a \in \overline K$ be a root of~$B$. For any branch of
$F(x,Y) = 0$ at $x = a$, we may write
\[
  \omega = p_a(z) dz,
\]
where $z = (x-a)^{1/m}$ for some positive integer $m$ and $p_a$ is a Laurent
series in $z$ with coefficients in $\overline{K}$. One can calculate the
coefficient of $1/z$ in $p_a$ and this will be the residue of $\omega$ at this
place. In this way, one can calculate the possible residues $\{r_1, \ldots,
r_s\}$ of~$\omega$.
Let $K_1$ be  a Galois extension of $K$ containing $\{r_1, \ldots,r_s\}$.
Let $C$ be the field of $D_t$-constants in $K_1$ and
$\{\tilde{r}_1, \ldots, \tilde{r}_\ell\}$ be a $C$-basis of $Cr_1 + \ldots
+Cr_s$. Let $L(Y) = wr(Y, \tilde{r}_1, \ldots \tilde{r}_\ell)$ where
$wr({\ldots})$ is the Wronskian determinant. One sees that $L(Y)$ is a nonzero
linear differential polynomial with coefficients in $K_1$ such that $L_1(r_i) =
0$ for $i=1, \ldots, s$. Define
\[
  L_1(Y) = \lclm\{L^\sigma(Y) \mid \sigma \in G\},
\]
where $G$ is the Galois group of $K_1$ over $K$, $L^\sigma(Y)$ denotes the
linear differential polynomial resulting from applying $\sigma $ to each
coefficient of $L$ and $\lclm$ denotes the least common left multiple.  We then
have that $L_1(Y)$ has coefficients in $K$ and annihilates the residues of~$\omega$.

\smallskip
\noindent {\it Method 2.}
We now assume that $\omega$ has poles only at finite
places and that there is no ramification at the poles. This implies that at any
place corresponding to a pole, we may write $\alpha = \sum_{i\geq i_0}
\alpha_i(x-x_0)^i$ for some $\alpha_i \in \bar{K}$.  Therefore the residue of
$\omega$ at this place is \[\alpha_{-1} = \bigl(D_x[(x-x_0)^{-i_0 -1}\alpha]\bigr)_{x =
  x_0}.\] This is the key to the following, parts of which in a slightly different form
appear in~\cite{BronForm}.

\begin{prop}\label{step1prop} Given $\omega$ as above, one can compute a
  polynomial $R \in K[Z]$ of degree
  \[
    m := \deg_Z(R) \leq \deg_Y(F)\deg_x(B^\ast),
  \]
  with $B^\ast$ the square free part of~$B$,
  such that if $a$ is a nonzero residue of $\omega$ then $R(a) = 0$.
  Furthermore, one can compute a {nonzero operator $L_1 =
  a_m(\Dtx)^m + a_{m-1}(\Dtx)^{m-1} + \ldots + a_0\in K\langle\Dtx\rangle$}
  such that $\tilde{\omega}:= L_1(\omega)$ has residue zero at all places.
\end{prop}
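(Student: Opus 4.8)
The plan is to prove the two assertions in turn and then combine them through property~2 of $\Dtx$ from Section~\ref{sec:derdiff}. Since $\omega=(A/B)\,dx$ has poles only over the finitely many zeros of $B$, all of them unramified, the residue of $\omega$ at any place is an element of $\overline{K}$; it therefore makes sense to look first for a polynomial $R\in K[Z]$ vanishing on the nonzero residues, and then for an operator $L_1\in K\langle\Dtx\rangle$ annihilating every root of $R$. Granting both, the conclusion is immediate: for every place $\cP$ we have $\resid_\cP(L_1(\omega))=L_1(\resid_\cP(\omega))$ by iterating property~2 and using that the coefficients of $L_1$ lie in $K$ (so they pull out of the residue, and $\Dtx$ acts on $\overline{K}$ as $D_t$). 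Now $\resid_\cP(\omega)$ is either zero or a root of $R$, and in both cases $L_1$ kills it, so $\tilde\omega=L_1(\omega)$ has residue zero at every place.

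For the first assertion I would first reduce to the case of simple poles. An algebraic Hermite-type reduction writes $\omega=d\eta+\omega_0$ where $\omega_0$ has only simple poles, all lying over zeros of the squarefree part $B^\ast$; since $d\eta$ has zero residue at every place, $\omega$ and $\omega_0$ have exactly the same residues, so I may assume $B$ is squarefree and work over its zeros. Over a root $a$ of $B$ and a branch with $y\to\eta$ (a root of $F(a,Y)$), the local expansion gives $\resid_\cP(\omega)=A(a,\eta)/B'(a)$. Hence the nonzero residues are exactly the nonzero roots of the eliminant
\[
  R(Z)=\resul_x\bigl(B(x),\ \resul_Y(F(x,Y),\ Z\,B'(x)-A(x,Y))\bigr)\in K[Z],
\]
and a degree count of the nested resultants --- at most $\deg_Y F$ branches over each of the $\deg_x(B^\ast)$ pole locations --- yields $\deg_Z R\le\deg_Y(F)\deg_x(B^\ast)$, as claimed.

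To build $L_1$ from $R$ I would use the differentiate-and-eliminate technique already sketched in the introduction. Let $R^\ast$ be the squarefree part of $R$ and work in the finite $K$-algebra $A=K[Z]/(R^\ast)\cong\prod_j K(r_j)$, on which $D_t$ extends uniquely (squarefreeness makes the $Z$-derivative of $R^\ast$ a unit in $A$, so $D_t$ of the image $\zeta$ of $Z$ is determined). The $\deg R^\ast+1$ elements $\zeta,D_t\zeta,\ldots,D_t^{\deg R^\ast}\zeta$ are $K$-linearly dependent in the $\deg R^\ast$-dimensional space $A$, so solving one linear system over $K$ produces $a_0,\ldots,a_m\in K$, not all zero, with $\sum_i a_i D_t^i\zeta=0$; projecting to each factor shows that the resulting operator $L_1=\sum_i a_i(\Dtx)^i$, of order at most $m$, annihilates every root of $R$. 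Together with the opening paragraph this finishes the proposition.

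I expect the main obstacle to be the first assertion, and specifically producing $R$ over the base field $K$ with the \emph{sharp} degree bound. The residue formula $A(a,\eta)/B'(a)$ and the clean eliminant are transparent only once the poles are simple, so making the reduction to simple poles rigorous (an algebraic Hermite reduction, as in \cite{BronForm}) while certifying that the pole locations remain among the zeros of $B^\ast$ is the delicate point; by contrast, the passage from $R$ to $L_1$ is routine linear algebra over $K$.
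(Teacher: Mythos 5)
Your proposal is correct in substance, and the second half (building $L_1$ from $R$ by working in $K[Z]/(R^\ast)$ and finding a $K$-linear dependence among $\zeta, D_t\zeta,\ldots,D_t^m\zeta$, then pushing $L_1$ past $\resid_\cP$ using property~2 of $\Dtx$) is essentially identical to the paper's argument. Where you genuinely diverge is in how the polynomial $R$ is obtained. You first perform an algebraic Hermite-type reduction $\omega = d\eta + \omega_0$ to reach simple poles, and then read off the residue as $A(a,\eta)/B'(a)$ and form a Rothstein--Trager-style eliminant. The paper instead never reduces the pole order: it splits $\omega$ according to the squarefree decomposition $B=B_1B_2^2\cdots B_\ell^\ell$ and, for each piece $A/B_i^i$, uses Bronstein's formula \cite{BronForm} --- the $(n-1)$-fold $x$-derivative of $Au^{-n}$ with the differential indeterminate $u$ specialized via $u^{(j)}\mapsto \tfrac{1}{j+1}B^{(j+1)}$ --- to write the residue at a higher-order pole directly as $\tilde p(a,b)/\tilde q(a)$, and then takes $R=\resul_x(\resul_Y(\tilde p - Z\tilde q, F), B)$. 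The unramifiedness hypothesis enters in both proofs but at different points: for the paper it guarantees that $x$-derivatives of $A$ stay regular at the pole, for you it guarantees that $x-a$ is a local parameter so the simple-pole residue formula applies. Your route buys a cleaner resultant at the cost of invoking Trager's algebraic Hermite reduction \cite{trager_thesis} (integral bases, etc.), which is heavier machinery and, as the paper notes in a remark, may introduce poles of $g_2$ at infinity; this is harmless for your purposes since $d\eta$ has zero residue everywhere and only finite places carry the residues of $\omega$, but you should say so explicitly, and you should also justify that the denominator of the reduced part divides $B^\ast$ so that the degree bound $\deg_Y(F)\deg_x(B^\ast)$ survives the reduction. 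With those two points made precise, your argument is a valid alternative proof.
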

\begin{proof} We may write
\[
 \alpha dx = \frac{A}{B}dx =\frac{A_1}{B_1}dx + \frac{A_2}{B_2^2}dx + \dots + \frac{A_\ell}{B_\ell^\ell}dx, 
\]
where the $A, A_i\in K(x,y)$ are regular at finite places and
$B = B_1B_2^2 \cdots B_\ell^\ell \in K[x]$ is the squarefree decomposition of~$B$.
To achieve our goal it is therefore enough to prove the claim for a differential
of the form $\alpha dx=\frac{A}{B^n}dx$, where $A\in K(x,y)$ is regular at finite places and
$B\in K[x]$ is squarefree. Following \cite{BronForm}, we let $u$ be a
differential indeterminate and let
\[
  h = \frac{(Au^{-n})^{(n-1)}}{(n-1)!} \in K(x,y)\langle u \rangle,
\]
where $K(x,y)\langle u \rangle$ is the ring of differential polynomials in $u$
with coefficients in $K(x,y)$ and $(\dots)^{(i)}$ denotes $i$-fold differentiation
with respect to~$x$. Let $\cP$ be a place where $\alpha$ has a pole and let $a$ and $b$
denote the values of $x$ and $y$ at the place. We note that since $A$ is regular at $\cP$ and  $\cP$ is not ramified, any
derivative of $A$ is also regular at $\cP$ (one needs the hypothesis that
these places are unramified to make this claim).  Taking into account the rules
of differentiation, we see that
\[
 h = \frac{p(x,y,u,u', \ldots , u^{(n-1)})}{q(x)u^t},
\]
where $p(x, Y,z_0, z_1, \ldots , z_{n-1}) \in K[x,Y, z_0, z_1, \ldots , z_{n-1}]$,
$t$ is some positive integer and $q(x)\in K[x]$ does not vanish at $\cP$, i.e. $q(a) \neq 0$.
 Let
 \[\tilde{p} = p(x,Y,B', \tfrac12B'', \tfrac13 B^{(3)}, \ldots , \tfrac1n B^{(n)})\in K[x,Y]\]
 and
 \[
  \tilde{q} = q(x)(B')^t\in K[x].
 \]
 One then shows, as in \cite{BronForm}, that $\tilde{p}(a, b)/\tilde{q}(a)$ is the residue
 of $\frac{A}{B^n} dx$ at~$\cP$.

 The above argument shows that the polynomial
 \begin{alignat*}1
   R = \resul_x\bigl(\resul_Y(\tilde{p}- Z\tilde{q}, F), B\bigr)\in K[Z]
 \end{alignat*}
vanishes at the residues of $\alpha dx$. The degree estimate for $R$
follows from the general degree estimate for resultants which states
for any $S,T\in K[u,v]$ that $\deg_u(\resul_v(S, T))$ is at most
\[
   \deg_u(S)\deg_v(T) + \deg_v(S)\deg_u(T).
\]
This implies first that the inner resultant in the definition of $R$ has
$Z$-degree at most $\deg_Y(F)$. (Note that no degree estimates for $\tilde p$
and $\tilde q$ are needed because $\deg_Z(F)=0$.) Applying the rule again to the
outer resultant gives the desired bound $\deg_Y(F)\deg_x(B)$.

Let $R \in K[Z]$ be the polynomial above.  If necessary, we may replace $R$ by a
squarefree polynomial having the same nonzero roots so we shall assume that $R$
is squarefree and of degree~$m$. Using the fact that $R$ and $\frac{dR}{dZ}$ are
relatively prime, there exist polynomials $R_i \in K[Z]$ of degree at most
$m-1$ such that if $\gamma$ is a root of~$R$, then $D_t^i(\gamma) = R_i(\gamma)$ for $i = 0,
1, \ldots$.  Since each $R_i$ has degree at most $m-1$, there exist $a_m, \ldots
, a_0 \in K$, not all zero, such that $(a_m(\Dtx)^m + a_{m-1}(\Dtx)^{m-1} +
\ldots + a_0)(\gamma) = 0$ for any root $\gamma$ of~$R$.  Using the fact that
$\resid_\cP(\Dtx\omega) = \Dtx(\resid_\cP(\omega))$ for any place $\cP$, one
sees that for $L_1 = a_m(\Dtx)^m + a_{m-1}(\Dtx)^{m-1} + \ldots + a_0$,
$\tilde{\omega} = L_1(\omega)$ has zero residue at any place. \end{proof}

\begin{remark} Although Method~2 does not require calculations in an
algebraic extension of~$K$, one needs the condition on ramification to prove
that it is correct. This condition is painful to verify and although
Propositions~\ref{der:prop} and \ref{der:ramified} imply that we can make a
transformation, if necessary, to guarantee that the differential has poles at
places that are not ramified, making such a transformation can increase the
complexity of the data.  In practice, one should calculate the operator~$L_1$
above without testing if the places at poles are ramified, calculate the
operator $L_2$ as in step~2 below (which requires no assumption concerning
ramification) and then test to see if the resulting operator $L_2\circ L_1$ is a
telescoper by checking if the identity $L_2(L_1(\alpha)) = D_x(\beta)$
holds, a simple calculation in $K(x,y)$. If this equality does not hold, then
one can make a change of variable $\bar{x}:=x+my$ for a random $m$ and try
again.  Proposition~\ref{der:ramified} guarantees that after a finite number of
trials one will succeed.
\end{remark}

\begin{example}[continuing Ex.~\ref{ex:8}] \label{ex:13}
  Let $F=y^2 -x(x+1)(x+2)(x+t)$ and consider
\[
  \omega = \Bigl(\frac{t^4}{x-1}+\frac1y\Bigr)dx=\frac uvdx,
\]
where $u=(x-1)y + t^4x(x+1)(x+2)(t+x)$ and $v=x(x+1)(x+2)(x+t)(x-1)$.
The only pole of $\omega$ is a simple pole at $x=1$, so the residues of $\omega$ are the roots of
\begin{alignat*}1
  &\resid_x(\resid_y(u - z D_x(v), F), v)=({\ldots})(z-t^4)^2z^8,
\end{alignat*}
where $({\ldots})$ stands for some factors which are free of $z$ and therefore irrelevant here.
The only nonzero residue $t^4$ is annihilated by $L_1:=t D_t-4$, so
\[
  \tilde\omega = (tD_t-4){(\omega)} = -\frac{(9t+8x)y}{2x(x+1)(x+2)(x+t)^2}dx
\]
has no nonzero residues.
\end{example}

\begin{remark}
\begin{enumerate}
\item In \cite{trager_thesis}, Trager develops a Hermite
reduction method for algebraic functions which, when applied to the differential
$\omega$ above, shows how one can write $\omega = (D_x(g_1) + g_2) dx$, where
$g_1, g_2 \in E$ and $g_2$ has only simple poles at finite points.  Regretably,
$g_2$ may have poles (of higher order) at infinity. Nonetheless, it would be
interesting to see if Trager's procedure can be used to increase efficiency in
our algorithm.
\item The above argument strongly relies on the fact that we are assuming that the
places where $\omega$ has poles are not ramified above places in $K(x)$.  It
would be of interest to give a method to calculate an operator $L_1$ satisfying
the conclusion of Proposition~\ref{step1prop} without this assumption.
\end{enumerate}
\end{remark}

\medskip
\noindent {\bf Step 2.} Let $\tilde{\omega}$ be as in the conclusion of
Proposition~\ref{step1prop}.  Again using the fact that $\resid_\cP(\Dtx\tilde\omega)
= \Dtx(\resid_\cP(\tilde\omega))$ for any place $\cP$, we have for all $i \in
\bZ$ that $(\Dtx)^i( \tilde{\omega})$ is again a differential with zero residues at
all places.  Such a differential is called a {\em differential of the second
  kind} (\cite{chevalley51}, p.~50) and a differential of the form $d\gamma,
\gamma \in E$ is called an {\em exact differential}. Note that any exact
differential is a differential of the second kind.  Corollary 1 of
(\cite{chevalley51}, p.~130) states that the factor space of the space of
differentials of the first kind by the space of exact differentials is a
$K$-vector space of dimension equal to $2G$, where $G$ is the genus of $E$.
Therefore, there exist $\tilde{a}_{2G}, \ldots , \tilde{a}_0 \in K$, not all
zero, such that for $L_2 = \tilde{a}_{2G}(\Dtx)^{2G} +
\tilde{a}_{2G-1}(\Dtx)^{2G-1} + \ldots + \tilde{a}_0$, $L_2(\tilde{\omega}) =
d\beta$ for some $\beta \in E$. Such $L_2$ and $\beta$ can be found as follows.

Let $\tilde{\omega} = \tilde{\alpha}dx$ and let $[E:K(x)] = m$.  For each $i
\geq 0$, there exist $\alpha_{i,0}, \ldots , \alpha_{i,m-1} \in K(x)$ such that
\[
 (\Dtx)^i(\tilde{\alpha})= ( y, \ldots,y^{m-1})\left(\begin{array}{c}\alpha_{i,0}\\ \vdots
    \\\alpha_{i,m-1}\end{array}\right).
\]
In addition, there exists an $m\times m$ matrix $A$ with entries in $K(x)$ such
that
\[
  (D_x(y), \ldots , D_x(y^{m-1}))= (y, \ldots ,y^{m-1})A.
\]
Let $a_0, \ldots, a_{2G}$ be elements of $K$ and $\beta_0, \ldots, \beta_{m-1}$
elements of $K(x)$. Letting $\beta = \beta_0 + \beta_1y+ \ldots
+\beta_{m-1}y^{m-1}$, the equation
\[
  d\beta = (a_{2G}(\Dtx)^{2G}(\tilde{\alpha})+ \ldots + a_0\tilde{\alpha})dx
\]
is equivalent to
\begin{alignat}1
&  D_x \begin{pmatrix} \beta_0\\ \vdots \\\beta_{m-1}\end{pmatrix}
+ A \begin{pmatrix} \beta_0\\ \vdots \\ \beta_{m-1}\end{pmatrix}\notag\\
&\quad
= a_{2G} \begin{pmatrix}\alpha_{2G,0}\\ \vdots  \\\alpha_{2G,m-1}\end{pmatrix}
+ \ldots
+ a_0 \begin{pmatrix}\alpha_{0,0}\\ \vdots  \\\alpha_{0,m-1}\end{pmatrix}.\label{step2eqn}
\end{alignat}
In \cite{barkatou_rational}, Barkatou describes a decision procedure for
deciding if there exist nontrivial $\beta_0, \ldots, \beta_{m-1} \in K(x)$ and
$a_0, \ldots, a_{2G} \in K$ satisfying \eqref{step2eqn} when $K$ is a computable
field (i.e., the arithmetic operations and derivation are computable and one has
an algorithm to factor polynomials over~$K$). Therefore one can apply this to $K
= k(t)$, where $k$ is a computable field of characteristic
zero to produce a desired $L_2$ and~$\beta$.

\begin{example}[continuing Ex.~\ref{ex:13}]
  Let again $F=y^2-x(x+1)(x+2)(x+t)$ and consider the differential
  \[
    \tilde\omega = -\frac{(9t+8x)y}{2x(x+1)(x+2)(x+t)^2}dx.
  \]
  Since the field $E$ has genus~1 and $\tilde\omega$ has only zero residues,
  there exists a telescoper for $\tilde\omega$ of order~2. Indeed, the algorithm outlined above finds
  that $L_2(\tilde\omega) = d \beta$, where
  \begin{alignat*}1
    L_2&=4 (99 t^5-540 t^4+1055 t^3-870 t^2+256 t) D_t^2\\
       &\quad{}+4 (297 t^4-1269 t^3+1900 t^2-1152 t+256) D_t\\
       &\qquad{}+3 (99 t^3-306 t^2+307 t-96)
  \end{alignat*}
  and
  \[
    \beta = \frac{3 (429 t^3+330 t^2 x-891 t^2-648 t x+384 t+256 x) y}{(t+x)^3}.
  \]
  For the differential $\omega$ from Example~\ref{ex:13}, it follows that we have
  $L\omega=d\beta$ with
  \begin{alignat*}1
    L&=L_2\circ(tD_t-4)=4 (t-2) (t-1) t^2 (99 t^2-243 t+128)D_t^3\kern-5pt\null\\
    &\quad{}+4 t (99 t^4-189 t^3-210 t^2+588 t-256)D_t^2\\
    &\qquad{}-3 (1089 t^4-4770 t^3+7293 t^2-4512 t+1024)D_t\\
    &\qquad\quad{}-12 (99 t^3-306 t^2+307 t-96).
  \end{alignat*}
  By Theorem~\ref{THM:equiv}, this operator $L$ is also a telescoper for the trivariate
  rational function~$f$ from Example~\ref{ex:8}. Certificates $g,h$ with
  \[
    L(f) = D_x(g) + D_y(h)
  \]
  can be obtained from $\beta$ following the calculations in the proof of
  Lemma~\ref{LEM:ratsol}. They are however too long to be printed
  here.
\end{example}

\begin{remark} Telescopers and certificates for {\em holomorphic}
differentials arise in Manin's solution of Mordell's Conjecture
\cite{manin58,manin63} and Step~2 of our procedure is just an effective version
of considerations that appear in these papers. Telescopers for holomorphic
differentials are also referred to as {\it Gauss-Manin Connections}.
\end{remark}

Combining the estimates on the order of the operators computed in steps 1 and~2 gives
the following bound on the order of telescopers for algebraic functions.
It can be viewed as a generalization of Corollary~14 in~\cite{bostan10},
which says that for every rational function $f=A/B\in K(x)$ there exists a telescoper
of order at most $\deg_x B^\ast$, where $B^\ast$ is the square free part of~$B$.

\begin{theorem}\label{thm:algbound}
  Let $E$ be an algebraic extension of $K(x)$, $\alpha=A/B\in E$ so that $A$ is regular
  at finite places and $B\in K[x]$.
  Let $B^\ast$ be the square free part of~$B$. Then there exists
  $\beta\in E$ and a nonzero operator $L\in K\langle D_t\rangle$ with $L(\alpha)=D_x(\beta)$
  and
  \[
    \deg_{D_t}(L) \leq [E:K(x)]\deg_x(B^\ast) + 2\operatorname{genus}(E).
  \]
\end{theorem}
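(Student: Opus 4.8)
The plan is to read the claimed bound simply as the sum of the orders of the two operators $L_1$ and $L_2$ produced by Steps~1 and~2 of the algorithm in Section~\ref{sec:alg}, exploiting that orders add under composition in $K\langle\Dtx\rangle$. Since those steps presuppose that the differential $\omega=\alpha\,dx$ has no poles above infinity and is unramified at its finite poles, I would first invoke Proposition~\ref{der:ramified} to pass to a coordinate $\bar x$ with $E=K(\bar x,y)$ for which conditions~1 and~2 hold, carry out the whole construction with respect to $\Dtxbar$, and then transport the resulting operator back to $\Dtx$ by Proposition~\ref{der:prop}. The crucial feature is that Proposition~\ref{der:prop} converts a monic operator of order $n$ in $\Dtxbar$ into one of the \emph{same} order $n$ in $\Dtx$, so no order is lost in this conversion.

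For Step~1 I would apply Proposition~\ref{step1prop} to obtain a nonzero $L_1\in K\langle\Dtx\rangle$ with $\operatorname{ord}(L_1)\le \deg_Y(F)\deg_x(B^\ast)=[E:K(x)]\deg_x(B^\ast)$ such that $\tilde\omega:=L_1(\omega)$ has zero residue at every place. Here the order bound is immediate from the bound $\deg_Z(R)\le\deg_Y(F)\deg_x(B^\ast)$ on the Rothstein--Trager-type polynomial $R$ whose roots contain the nonzero residues, together with the construction in that proposition of an operator of order at most $\deg_Z(R)$ annihilating those roots; recall $\deg_Y(F)=[E:K(x)]$ because $F$ is the minimal polynomial of $y$ over $K(x)$ and $E=K(x,y)$.

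For Step~2 I would first observe, using property~2 of $\Dtx$ on differentials, that $\resid_\cP\bigl((\Dtx)^i\tilde\omega\bigr)=(\Dtx)^i\bigl(\resid_\cP\tilde\omega\bigr)=0$ for every place $\cP$ and every $i$, so each $(\Dtx)^i\tilde\omega$ is again a differential of the second kind. Since the quotient of the space of second-kind differentials by the exact differentials is a $K$-vector space of dimension $2G$, where $G=\operatorname{genus}(E)$ (Chevalley, loc.\ cit.), the $2G+1$ differentials $\tilde\omega,\Dtx\tilde\omega,\dots,(\Dtx)^{2G}\tilde\omega$ must be $K$-linearly dependent modulo exact differentials. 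This produces $L_2\in K\langle\Dtx\rangle$ with $\operatorname{ord}(L_2)\le 2G$ and $L_2(\tilde\omega)=d\beta$ for some $\beta\in E$. Setting $L=L_2\circ L_1$ gives $L(\alpha)=D_x(\beta)$, and since the leading coefficients lie in the field $K$ the order of the composition is exactly $\operatorname{ord}(L_1)+\operatorname{ord}(L_2)\le[E:K(x)]\deg_x(B^\ast)+2G$, which is the asserted estimate.

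The step I expect to be the main obstacle is the bookkeeping attached to the reduction to conditions~1 and~2. Clearing a pole at infinity uses the M\"obius change $\bar x=cx/(x-c)$, which preserves $K(\bar x)=K(x)$ but can move a former pole at infinity to a finite $\bar x$-value, potentially raising $\deg_{\bar x}(\bar B^\ast)$; removing ramification uses $\bar x=x+my$, under which both $[E:K(\bar x)]$ and the squarefree-part degree of the new denominator could a priori differ from the originals. Making the argument fully rigorous therefore requires either verifying that these transformations do not inflate the product $[E:K(x)]\deg_x(B^\ast)$ governing the Step~1 bound, or reading the theorem as stated under the standing assumptions~1 and~2 of Section~\ref{sec:alg}. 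Everything else---the additivity of orders, the second-kind property of the iterated derivatives, and the $2G$-dimensional cohomology count---is routine once Proposition~\ref{step1prop} and the machinery of Section~\ref{sec:derdiff} are in hand.
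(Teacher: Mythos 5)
Your proof follows the paper's argument exactly: the paper obtains the theorem by combining the order bound $\deg_Y(F)\deg_x(B^\ast)=[E:K(x)]\deg_x(B^\ast)$ for the operator $L_1$ of Proposition~\ref{step1prop} with the order bound $2\operatorname{genus}(E)$ for the operator $L_2$ of Step~2, and composing, just as you do. The caveat you flag about the coordinate changes of Proposition~\ref{der:ramified} possibly altering $[E:K(x)]$ or $\deg_x(B^\ast)$ is a genuine subtlety, but the paper leaves it equally implicit (the bound is tacitly read under the standing assumptions~1 and~2 of Section~\ref{sec:alg}), so your reading matches the intended proof.
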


\section{Implementation and other\hskip0ptplus1fill\break examples} \label{sec:impl}

We have produced a prototype implementation of the algorithms described above on
top of Koutschan's Mathematica package
``HolonomicFunctions.m''~\cite{koutschan10} and compared the performance to the
built-in creative telescoping implementations of this package. In order to make
the comparison as fair as possible, we have tried to reuse as much code from
Koutschan's package as possible, so that the timings will not implicitly compare
two different implementations of some subroutine but reflect as closely as
possible the speed-up (or slow-down) offered by the ideas presented above.

Five different methods to solve the creative telescoping problem for a rational
function $f\in k(t,x,y)$ were considered: {\textbf{(CC)}}~first use Chyzak's
algorithm~\cite{chyzak2000} to find a holonomic system $S$ of operators in
$k(t,x)\langle D_t,D_x\rangle$ such that for all $L\in S$ there exists a
rational function $g\in k(t,x,y)$ with $L(f)=D_y(g)$, afterwards apply the same
algorithm to $S$ to obtain a telescoper $L\in k(t)\langle D_t\rangle$ for~$f$;
{\textbf{(CK)}}~first compute $S\subseteq k(t,x)\langle D_t,D_x\rangle$ as in
variant~{(CC)}, then apply Koutschan's ansatz~\cite{koutschan10a} to $S$ to obtain
{a} telescoper $L$ for~$f$; {\textbf{(K)}}~compute a telescoper for $f$ directly
with Koutschan's ansatz; {\textbf{(EC)}}~use the reduction from
Section~\ref{sec:equiv}, then apply Chyzak's algorithm to the resulting
algebraic functions, and then take the least common left multiple of the
results; {\textbf{(EA)}}~use the reduction from Section~\ref{sec:equiv}, then apply
the algorithm from Section~\ref{SEC:alg} to the resulting algebraic functions,
and then take the least common left multiple of the results.

Table~\ref{tab:1} shows the performance of these five approaches for the following examples.

\begin{table*}
  \vspace{-\smallskipamount}
  \begin{center}
  \def\c#1{\hbox to4em{\hss\smash{\raisebox{-1.25ex}{#1}}\hss}}
  \begin{tabular}{r|r|r|r|r|r|r|r|r}
       & \c{CC} & \c{CK} & \c K & \c{EC} & \c{EA} & \multicolumn{3}{|c}{telescoper statistics} \\
       &      &      &      &      &      & order & degree & bytecount \\\hline
     1 & $>$150h & 4000.89 & 469.03 & 1.30 & 1.04 & 3 & 6 & 3464\\ 
     2 & 16029.55 & 40043.01 & $>$100h & 1390.14 & 1646.53 & 6 & 71 & 76472 \\ 
     3 & $>$150h & 350495.88 & $>$150h & 203.44 & 328.08 & 9 & 93 & 140520 \\ 
     4 & 638.70 & 1099.08 & $>$40Gb & 37606.28 & 216201.88 & 10 & 32 & 41840 \\ 
     5 & 23823.70 & 676.13 & 19085.67 & 1114.34 & 3117.43 & 7 & 27 & 25320 
  \end{tabular}
  \end{center}
  \vspace{-\smallskipamount}
  \caption{Runtime comparison for the examples described in the text.}\label{tab:1}
\end{table*}

 \begin{enumerate}
 \item The rational function $f$ from Example~\ref{ex:8} above. This example is not
   representative but was particularly designed to be easy for our algorithms
   and difficult for the known ones.
 \item Here $f:=\tfrac1{xy}h(\frac t{xy},x,y)$ with
   $h(t,x,y) = \bigl(1-\tfrac x{1-x}-\tfrac y{1-y}-\tfrac t{1-t}-\tfrac
   {xy}{1-xy} -\tfrac {xt}{1-xt}-\tfrac{yt}{1-yt}-\tfrac{xyt}{1-xyt}\bigr)^{-1}$.
   This is the problem of enumerating diagonal 3D-Queens walks raised in~\cite{bostan11}.
   Our calculation confirms the correctness of the telescoper conjectured there.
 \item Let now $h(t,x,y)=\bigl(1-\tfrac{xy}{1-xy} -\tfrac {xt}{1-xt}-\tfrac
   {yt}{1-yt}-\tfrac{xyt}{1-xyt}\bigr)^{-1}$ and $f=\frac1{xy}h(\frac t{x^2y},x,y)$.
   This is a variation of the previous problem, with the points $(2n,n,n)$ replacing
   the diagonal and now allowing steps along the axes.
 \item The rational function
   $h(t,x,y) = 2t^2/((1-t)(3-(x+y+t+xy+xt+yt)+3xyt))$
   appears in \cite{pemantle08} as the generating function for the probability of certain
   structures in random groves. See \cite{pemantle08} for details on the combinatorial
   background. Here we compute the diagonal series coefficients of $f$ by applying
   creative telescoping to $f=\frac1{xy}h(\frac t{xy^3},x,y)$. As can be seen in this
   example, our algorithms are in some cases not superior.
 \item {With $h$ as before, we now consider $f=\frac1{xy}h(\frac 1{x^2y^2},x,y)$.
   Note the large difference between CC and CK.}
\end{enumerate}

We have put timings for a number of additional examples on the
website~\cite{www}.  Also our code and the certificates for Example~\ref{ex:13}
can be found there. The examples we tested suggest that the reduction from
rational functions to algebraic functions can cause a decent speed-up. It
does seem to depend on whether the Rothstein-Trager resultant of the input
factors into several small factors or not. If it does, it is advantageous
because solving several small instances of Problem~\ref{PB:bialg} is cheaper
than solving a single big one.  Whether after the reduction, the algorithm of
Section~\ref{SEC:alg} or some other method is applied to the resulting algebraic
functions, makes usually not much of a difference. Our algorithm tends to be faster when
Step~1 in Section~\ref{sec:alg} already finds a great part of the telescoper,
leaving only a small coupled differential system to be solved in Step~2.

 \bibliographystyle{plain}

\newcommand{\SortNoop}[1]{}\def\cprime{$'$}

\end{document}